\newtheorem{theorem}{Theorem}
\newtheorem{proposition}{Proposition}
\newtheorem{lemma}{Lemma}
\newtheorem{example}{Example}
\newtheorem{definition}{Definition}
\newtheorem{remark}{Remark}
\providecommand{\doi}[1]{\textsc{doi}: \href{http://dx.doi.org/#1}{\nolinkurl{#1}}}
\newcommand{\resultbox}[1]{\noindent\shadowbox{\parbox{.959\columnwidth}{#1}}}
\newenvironment{packed_description}{
\begin{description}[leftmargin=*]
  \setlength{\itemsep}{1pt} 
  \setlength{\parskip}{0pt}
  \setlength{\parsep}{0pt}
}{\end{description}}
\definecolor{DarkGreen}{rgb}{0,0.3,0}
\definecolor{DarkBlue}{rgb}{0,0,0.7}
\definecolor{grey}{rgb}{0.5,0.5,0.5}
\definecolor{redOrange}{rgb}{0.8,0.2,0}
\definecolor{purple}{rgb}{0.3,0,0.8}
\lstdefinelanguage{Spectra}[]{}{
  commentstyle=\color{grey}\itshape,
  keywordstyle=[1]\color{blue}\bfseries,
  keywordstyle=[2]\color{DarkGreen}\bfseries,
  keywordstyle=[3]\color{red}\bfseries,
  keywordstyle=[4]\color{purple}\bfseries,
  morekeywords=[1]{G,GF,next,H,ONCE,boolean,false,true,spec,alw,always,alwaysEventually,alwEv,trans,ini,initially,iff},
  morekeywords=[2]{out,sys,gar},
  morekeywords=[3]{in,env,asm},
  morekeywords=[4]{counter,overflow,inc,reset,underflow,type,monitor,counter,define,predicate},
  morecomment=[l]{//},
  morecomment=[s]{/**}{*/}
}
\def\lst@makecaption{%
  \def\@captype{table}%
  \@makecaption
}
\newcommand{\op}[1]{\textbf{\texttt{\color{blue}#1}}}
\newcommand{\ddmin}{\texttt{DDMin}\xspace}
\newcommand{\qxplain}{\texttt{QuickXplain}\xspace}
\newcommand{\quickcore}{\texttt{QuickCore}\xspace}
\newcommand{\punch}{\texttt{Punch}\xspace}
\begin{document}

\title{Unrealizable Cores for \\Reactive Systems Specifications}

\author{
\IEEEauthorblockN{Shahar Maoz}
\IEEEauthorblockA{Tel Aviv University\\ Tel Aviv, Israel}
\and
\IEEEauthorblockN{Rafi Shalom}
\IEEEauthorblockA{Tel Aviv University\\ Tel Aviv, Israel}
}

\maketitle

\begin{abstract}

One of the main challenges of reactive synthesis, an automated procedure to
obtain a correct-by-construction reactive system, is to deal with unrealizable
specifications. One means to deal with unrealizability, in the context of GR(1),
an expressive assume-guarantee fragment of LTL that enables efficient synthesis,
is the computation of an unrealizable core, which can be viewed as a
fault-localization approach. Existing solutions, however, are computationally
costly, are limited to computing a single core, and do not correctly support
specifications with constructs beyond pure GR(1) elements.

In this work we address these limitations. First, we present QuickCore, a novel
algorithm that accelerates unrealizable core computations by relying on the
monotonicity of unrealizability, on an incremental computation, and on
additional properties of GR(1) specifications.
Second, we present Punch, a novel algorithm to efficiently compute all
unrealizable cores of a specification. Finally, we present means to correctly
handle specifications that include higher-level constructs beyond pure GR(1)
elements.

We implemented our ideas on top of Spectra, an open-source language and
synthesis environment. Our evaluation over benchmarks from the literature shows
that QuickCore is in most cases faster than previous algorithms, and that its
relative advantage grows with scale. Moreover, we found that most
specifications include more than one core, and that Punch finds all the cores
significantly faster than a competing naive algorithm.

\end{abstract}

\section{Introduction}

Reactive synthesis is an automated procedure to obtain a correct-by-construction
reactive system from its temporal logic specification~\cite{PR89}.
GR(1) is an assume-guarantee fragment of Linear Temporal Logic (LTL) that has an
efficient symbolic synthesis algorithm~\cite{BJP+12}. GR(1) specifications
include assumptions and guarantees that specify what should hold in all initial
states, in all states and transitions (safeties), and infinitely often on every
computation (justices). The expressive power of GR(1) covers many well-known LTL
specification patterns~\cite{DAC99,MaozR15}, and it has been recently applied in
several domains, including robotics (see, e.g., ~\cite{ManiatopoulosSP16}).

One of the main challenges of reactive synthesis in general and of GR(1)
synthesis in particular is to deal with unrealizable
specifications~\cite{AlurMT13,CavezzaA17,CimattiRST08,KonighoferHB13,KuventMR17,MaozRS19}.
To help engineers debug unrealizable specifications, several works have
suggested the computation and use of an \textit{unrealizable core}, a locally
minimal subset of guarantees that is sufficient for unrealizability. Computing
the core may be viewed as a fault-localization approach to unrealizability.

However, existing solutions to computing an unrealizable core suffer from three
main limitations. First, core computation in existing solutions is costly, as it
requires many invocations of realizability checking. Second, existing solutions
are limited to finding a single core and thus provide only partial information
about the realizability problems in the specification at hand. Third, existing
solutions are limited to pure GR(1) specifications, and do not correctly handle
specifications with richer language constructs. All these limit the
applicability of core computations as an effective means to dealing with
unrealizability.

In this work we address these three limitations. Our first contribution is
\quickcore, a novel algorithm that accelerates unrealizable core computations.
The effectiveness and correctness of \quickcore are based on four observations.
First, that unrealizability is monotonic.
Second, that core computations can be incremental. Third, that checking the
realizability of specifications with fewer justices is typically significantly
faster than of those with more justices. And fourth, that minimizing initial
guarantees requires only one fixed-point computation.


Our second contribution is \punch, an algorithm to efficiently compute
all unrealizable cores of a specification. Moreover, \punch computes
the intersection of all cores without having to compute all of them. In
particular, \punch is able to quickly check whether a core that was found is the
only one that may be found.


Finally, our third contribution is the extension of core computations to
correctly handle specifications that include, beyond pure GR(1) elements,
higher-level constructs such as patterns, past LTL operators, monitors, and
counters, as supported, e.g., in Spectra~\cite{MaozR21}. This is important in
order to correctly apply core computations to these more compact and readable
specifications, and to present the results not at the level of the internal
representation but at the abstraction level used by the engineer who wrote the
specification. 




It is important to note that while \quickcore is specific to GR(1)
unrealizability, \punch is in fact a generic algorithm for computing all cores.
Indeed, the definition and correctness of \punch in Alg.~\ref{alg:punch} is
independent of realizability checking or GR(1). Thus, \punch can be used, e.g.,
to compute all vacuity cores~\cite{MaozS20}, all realizable cores (subsets of
assumptions that suffice for realizability), and more generally, all locally
minimal subsets for any monotonic criterion (e.g., unsatisfiability). Similarly,
our third contribution on correctly dealing with higher-level constructs beyond
pure GR(1) applies not only to unrealizable cores, but, in principle, to any
similar analysis of specifications with higher-level constructs that are reduced
to GR(1). We consider this generality a nice advantage of our work, with
possible future applications beyond the focus of the present paper.

We implemented our ideas on top of Spectra, a rich specification language and
open source tool set for reactive synthesis~\cite{MaozR21}. All our algorithms
are implemented on top of, and compared to, recently suggested heuristics for
realizability and core computations for GR(1)~\cite{FirmanMR20}. We validated
and evaluated our work on benchmarks from the literature. The evaluation shows
that \quickcore is in almost all cases faster than previous algorithms and that
many specifications indeed include more than one core. It further shows that
\punch is much faster than a competing naive algorithm, and that it is able
to compute almost all cores for the SYNTECH benchmarks in reasonable times.

Means to deal with unrealizability of temporal specifications have been studied
in the literature. Beyond unrealizable cores~\cite{KonighoferHB13}, these also
include different approaches to counter-strategy
generation~\cite{KuventMR17,MaozS13} and assumption refinement or
repair~\cite{AlurMT13,CavezzaA17,MaozRS19}. We discuss related work in
Sect.~\ref{sec:related}.

\section{Running Example}
\label{sec:example}


\lstset{language=Spectra}
\lstinputlisting[label=lst:spec, float=t, caption={\small Lift controller specification, adopted from~\cite{MaozS20}},
belowskip=-2em,aboveskip=3pt]{listings/spec.smv}

As a running example, we use a lift specification (see Lst.~\ref{lst:spec}),
taken from~\cite{MaozS20}, which has appeared in several variants in previous
GR(1)-related papers~\cite{AlurMT13, BJP+12, CavezzaA17, PitermanPS06}. The
specification is written in Spectra format~\cite{MaozR21,SpectraWebsite}.
It is small and simple, to fit the paper presentation. In our evaluation we have
used larger and more complex specifications, taken from benchmarks.

The specification models a controller for a three floors lift. The lift has
three request buttons, one on each floor. Requests are represented by
environment variables \texttt{b1}, \texttt{b2}, and \texttt{b3}, which may be
independently true or false. The current floor of the lift is represented by the
system variable \texttt{f}. The environment is required to initially have no
requests (l.~8), turn off any granted request at the next step (ll.~11-13), and
keep ungranted requests (ll.~16-18). The system is required to start the lift on
the first floor (l.~21), and to disallow the lift to move more than one floor at
a time (l.~24). The system is also required not to move up when there are no
requests (l.~27), to eventually grant every request (ll.~30-32), and to make
sure every floor is visited infinitely often (ll.~35-37).

The specification is unrealizable but it is not easy to see why and debug it.
So, the engineer may want to employ a fault-localization approach and find an
unrealizable core, a locally minimal subset of guarantees that is sufficient for
unrealizability. A modified specification with only these guarantees is already
unrealizable; removing any one guarantee from that specification, will render it
realizable.

By running our new algorithm \quickcore
the engineer finds that the set of guarantees in lines \{21,27,36\} is a core.

However, an unrealizable core is not necessarily unique, i.e.,
the specification may induce additional unrealizable cores.
Our new algorithm \punch, first finds the above core, then finds that the
guarantee in line 27 is the intersection of all the cores, and finally reports
all five remaining cores, namely \{21,27,37\}, \{27,35,36\}, \{27,35,37\},
\{27,36,37\}, and \{24,27,30,37\}. As our evaluation shows, having more
than one core is indeed rather common.

Note that the early detection of the intersection of all the cores indicates
whether additional cores exist. Moreover, the intersection is of interest, since
making the specification realizable by removing or weakening only one guarantee
is only possible with the guarantees in the intersection. Finally, the size of
the intersection serves as a lower bound on the size of any of the cores induced
by the specification.

\section{Preliminaries}



\subsection{LTL, GR(1), and Realizability}
\label{subsec:realize}

We use a standard definition of \emph{linear temporal
logic (LTL)}, e.g., as found in~\cite{BJP+12}, over present-future
temporal operators \op{X} (next), \op{U} (until), \op{F} (finally), and
\op{G} (globally), and past temporal operator \op{H} (historically),
for a finite set of Boolean variables $\mathcal{V}$.
LTL formulas can be used as specifications of reactive systems, where
atomic propositions are interpreted as environment (input) and
system (output) variables. An assignment to all variables is called
a state. 

GR(1) is a fragment of LTL. A GR(1) specification contains initial assumptions and guarantees over initial
states, safety assumptions and guarantees relating the current and next
state, and justice assumptions and guarantees requiring that an
assertion holds infinitely many times during a computation.
We use the following abstract syntax definition of a GR(1) specification taken
from~\cite{MaozS20}.

\begin{definition} [Abstract syntax of a specification]
\label{def:abst}
A GR(1) specification is a tuple $Spec = \langle V_e, V_s, D, M_e, M_s\rangle$,
where $V_e$ and $V_s$ are sets of environment and system variables respectively,
$D:V_e\cup V_s \rightarrow Doms$ assigns a finite domain to each
variable\footnote{The use of any finite domain rather than only Boolean
variables is straightforward and supported by many tools, including Spectra.},
and $M_e$ and $M_s$ are the environment and system modules. A module is a
triplet $M=\langle I, T, J\rangle$ that contains sets of initial assertions
$I=\{I_n\}_{n=1}^{i}$, safety assertions $T=\{T_n\}_{n=1}^{t}$, and justice
assertions $J=\{J_n\}_{n=1}^{j}$ of the module, where $i=|I|, t=|T|$ and
$j=|J|$.
The set of elements of module $M=\langle I, T, J\rangle$ is $B_M = I\cup
\{\op{G}~T_i\}_{i=1}^{t} \cup \{\op{GF}~J_i\}_{i=1}^{j}$.
\end{definition}

Given a set $\mathcal{Z}$ of variables, $\mathcal{Z}' = \{\op{X} v |
v\in \mathcal{Z}\}$ contains a copy of its variables at the next state.
Let $M_e=\langle I_e, T_e, J_e\rangle$, $M_s=\langle I_s, T_s, J_s \rangle$, and
$\mathcal{V} = V_e\cup V_s$.
Then, the elements of $I_e, T_e, J_e, I_s, T_s$ and $J_s$ are propositional
logic expressions over $V_e, \mathcal{V} \cup V'_e, \mathcal{V}, \mathcal{V},
\mathcal{V} \cup \mathcal{V}'$ and $\mathcal{V}$ respectively.

GR(1) has efficient symbolic algorithms for realizability checking and 
controller synthesis, presented in~\cite{BJP+12,PitermanPS06}.
For this a game structure of a two-player game $G=\langle \mathcal{V},\mathcal{X}, \mathcal{Y}, 
\theta_e, \theta_s, \rho_e, \rho_s,\varphi\rangle$ is defined. The GR(1) game has a set of variables
$\mathcal{V}= V_e\cup V_s$, environment and system variables ($\mathcal{X}=V_e$ and
$\mathcal{Y}=V_s$ resp.), environment and system initial states ($\theta_e=\wedge_{d\in I_e}d$ 
and $\theta_s=\wedge_{d\in I_s}d$ resp.), environment and system transitions ($\rho_e = \wedge_{t\in T_e}t$
and $\rho_s = \wedge_{t\in T_s}t$  resp.), and a winning condition 
$\varphi=\bigwedge_{j \in J_e} \op{GF}j \rightarrow \bigwedge_{j \in J_s} \op{GF}j$.

A GR(1) specification is realizable, i.e., allows an implementation, iff
the system wins the game.
Roughly, this means that if the environment keeps all initial
assumptions then the system should keep all initial guarantees, as long as the
environment keeps all safety assumptions the system should keep all safety
guarantees, and in all infinite plays, if the environment keeps all justice
assumptions the system should keep all justice guarantees.

For this the algorithm of~\cite{BJP+12,PitermanPS06} computes a winning region which is a
set of winning states from which the system has a winning strategy. 
A winning strategy prescribes the outputs of
a system for all possible environment choices that allows the system to win.
The winning region is computed according to a fixed-point computation over transitions
and justices alone. 
GR(1) realizability checks if for all initial
environment choices the system can enter a winning state.
GR(1) synthesis computes a winning strategy, if one exists.

\subsection{Monotonic Criteria and Cores}

Given a set $E$, and a monotonic criterion on subsets of $E$, a core is a local minimum 
that satisfies the criterion. Formally:

\begin{definition} [Monotonic criterion]
A Boolean criterion over subsets of $E$ is monotonic iff for any two sets $A,B$ such that 
$A\subseteq B\subseteq E$, if $A$ satisfies the criterion then $B$ satisfies the criterion.
\end{definition}

\begin{definition} [Core] 
Given a set $E$ and a monotonic criterion over its subsets, a set $C\subseteq E$ is a core of
$E$ iff $C$ satisfies the criterion, and all its proper subsets $C'\subset C$ do not
satisfy the criterion.
\end{definition}

Unrealizability is monotonic w.r.t. subsets of guarantees, i.e., 
adding guarantees to an unrealizable specification
keeps it unrealizable. Intuitively, this is so because adding
guarantees strengthens the constraints on the system, and
does not change the constraints on the environment. Formally:

\begin{proposition} [Unrealizability is monotonic]
\label{prop:monotonic}
Given two specifications, $Spec_1 = \langle V_e, V_s, D, M_e, M^1_s\rangle$, and
$Spec_2 = \langle V_e, V_s, D, M_e, M^2_s\rangle$, such that $B_{M^1_s}\subseteq B_{M^2_s}$.
Then, if $Spec_1$ is unrealizable, $Spec_2$ is also unrealizable. 
\end{proposition}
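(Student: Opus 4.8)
The plan is to prove the contrapositive: assuming $Spec_2$ is realizable, I would show that $Spec_1$ is realizable. The whole argument rests on a single observation. Since the two specifications share the same environment module $M_e$, the same variables, and the same domains, the two GR(1) games $G_1$ and $G_2$ built from them are identical in every environment component ($\mathcal{X}=V_e$, $\theta_e$, $\rho_e$, and the justice set $J_e$) and differ only in the system components derived from $M^1_s$ versus $M^2_s$.

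First I would unpack the hypothesis $B_{M^1_s}\subseteq B_{M^2_s}$. Because the three kinds of elements in $B_M$ are syntactically distinguished (initial assertions are plain propositional formulas, safety elements carry a leading $\op{G}$, and justice elements a leading $\op{GF}$), the inclusion splits into the three componentwise inclusions $I^1_s\subseteq I^2_s$, $T^1_s\subseteq T^2_s$, and $J^1_s\subseteq J^2_s$. Since the system initial condition, transition, and justices of each game are conjunctions over exactly these sets, the $Spec_1$ components are logically weaker: $\theta^2_s\rightarrow\theta^1_s$ and $\rho^2_s\rightarrow\rho^1_s$ are valid, and for the winning conditions $\varphi^k=\bigwedge_{j\in J_e}\op{GF}\,j\rightarrow\bigwedge_{j\in J^k_s}\op{GF}\,j$ we get $\varphi^2\rightarrow\varphi^1$, because the conjunction of $\op{GF}$ terms over the larger set $J^2_s$ entails the conjunction over the smaller set $J^1_s$ while the environment antecedent is untouched.

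The core step is then a strategy-transfer argument. I would take a winning system strategy $\sigma$ for $G_2$ and argue that the very same $\sigma$ is winning for $G_1$. Fix any play consistent with $\sigma$ and with the fixed environment moves. The system's winning condition in a GR(1) game has the shape ``environment assumption $\rightarrow$ system guarantee'' for the initial, safety, and justice parts; in every one of these parts the environment antecedent is identical across $G_1$ and $G_2$, while the system consequent for $G_1$ is implied by the one for $G_2$ by the entailments above. Since an implication is monotone in its consequent, the $Spec_2$ winning condition entails the $Spec_1$ winning condition on every play; hence every play won by $\sigma$ in $G_2$ is also won in $G_1$, so $\sigma$ witnesses realizability of $Spec_1$.

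The only delicate point is the safety part, where the ``as long as the environment keeps its safety assumptions'' semantics intertwines $\rho_e$ and $\rho_s$ inside a single $\op{G}$ operator. Here I would make the entailment precise at the level of the per-step obligation: at each position the premise constrains only the (unchanged) environment safety history, whereas the obligation is the (weakened) system safety $\rho^k_s$, so the step obligation for $k=1$ is implied by that for $k=2$, and this lifts through $\op{G}$. With this in hand the contrapositive is complete, and the stated implication follows.
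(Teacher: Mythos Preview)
Your proof is correct. The paper itself does not give a formal proof of this proposition; it only offers the one-sentence intuition that ``adding guarantees strengthens the constraints on the system, and does not change the constraints on the environment,'' and then states the proposition without a \texttt{proof} environment. Your contrapositive strategy-transfer argument is the natural way to make that intuition precise, and the decomposition of $B_{M^1_s}\subseteq B_{M^2_s}$ into the three componentwise inclusions (justified by the syntactic $\op{G}$/$\op{GF}$ markers in Definition~\ref{def:abst}) together with the monotonicity of implication in its consequent carries the argument through cleanly. In short, you have supplied a proof where the paper was content to leave the claim as folklore.
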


\subsection{Existing Domain-Agnostic Minimization Algorithms}
\label{subsec:ddmin}

We recall three existing domain-agnostic minimization algorithms from the
literature, namely delta debugging~\cite{ZellerH02} (\ddmin),
\qxplain~\cite{Junker04, MarquesJB13}, and linear minimization, which we
denote by \texttt{LinearMin}. All three algorithms find a core of a set $E$,
given a monotonic criterion \texttt{check}.

\begin{algorithm}
\caption{The delta debugging algorithm \ddmin from~\cite{ZellerH02} as a
recursive method that minimizes a set of elements $E$ by partitioning
it into $n$ parts (initial value $n=2$)}
\label{alg:ddmin}
\begin{algorithmic}[1]\footnotesize
  \For {$part \in partition(E, n)$}\label{alg:ddmin:parts}
    \If {\textbf{check}($part$)}\label{alg:ddmin:success1}
    	\State {return \textbf{ddmin}($part$, 2)}\label{alg:ddmin:n2}
    \EndIf
  \EndFor
  \For {$part \in partition(E, n)$}\label{alg:ddmin:complements}
    \If {\textbf{check}($E \setminus part$)}\label{alg:ddmin:success2}
      \State {return \textbf{ddmin}($E \setminus part$, $max(n-1,2)$)}\label{alg:ddmin:decrease}
    \EndIf
  \EndFor
  \If {$n \geq |E|$}
    \State {return$E$} \label{alg:ddmin:terminate}
  \EndIf
  \State {return \textbf{ddmin}($E$, $min(|E|, 2n)$)}\label{alg:ddmin:increase}
\end{algorithmic}
\end{algorithm}

We show a pseudo-code for \ddmin in Alg.~\ref{alg:ddmin}.
The inputs for the algorithm are a set $E$ and the number $n$ of parts of $E$ to
check. The algorithm starts with $n=2$ and refines $E$ and $n$ in recursive
calls according to different cases
(ll.~\ref{alg:ddmin:n2},~\ref{alg:ddmin:decrease},
and~\ref{alg:ddmin:increase}).
The computation starts by partitioning $E$ into $n$ subsets and evaluating
\texttt{check} on each subset $part$ and its
complement.
If \texttt{check} holds (l.~\ref{alg:ddmin:success1} or
l.~\ref{alg:ddmin:success2}), the search is continued recursively on the
subset $part$ or on its complement respectively. If \texttt{check} holds neither on any
subset $part$ nor on the complements, the algorithm increases the granularity of
the partitioning to $2n$ (l.~\ref{alg:ddmin:increase}) and recurs, or
terminates when the granularity is not smaller than the size of $E$
(l.~\ref{alg:ddmin:terminate}).
\ddmin has quadratic worst-case complexity and logarithmic best-case complexity
in terms of $|E|$.


\qxplain is a recursive divide and conquer algorithm that minimizes each
half, one after the other, in an incremental way (see
Sect.~\ref{subsec:minwbase}). It has a worst-case complexity of $O(k+klog(
{|E|\over k} ))$, where $k$ is the size of the largest core. To the best of our
knowledge, \qxplain was never previously applied to computing unrealizable
cores for reactive systems specifications.

Finally, \texttt{LinearMin} was originally suggested in~\cite{CimattiRST08}, and
compared with \ddmin in~\cite{KonighoferHB13}.
\texttt{LinearMin} goes over elements of the input set one by one, and removes
an element iff the criterion holds for the set without the element.
An example of \texttt{LinearMin} can be found in
ll.~\ref{alg:qc:begini}-\ref{alg:qc:endini} of Alg.~\ref{alg:qc}. The complexity
of \texttt{LinearMin} is linear in $|E|$.

\subsection{Minimization With a Base and Incremental Core Computation}
\label{subsec:minwbase}
We define a notion of minimization with a base as follows.

\begin{definition}[Minimization with a base]
\label{def:minwbase}
Assume a set $E$ of elements, two disjoint subsets $Base,A\subseteq E$,
and a minimization algorithm \texttt{Alg} that detects cores
according to a monotonic criterion \texttt{check}. Assume also that the set $Base\cup A$
satisfies the criterion.
We denote by $MinWBase(Alg, E, Base, A, check)$ an algorithm that computes a locally minimal
$A'\subseteq A$ s.t. $Base\cup A'$ satisfies the criterion, by applying \texttt{Alg} to $A$,
and replacing every \texttt{check}$(X)$ operation with \texttt{check}$(Base\cup X)$.
\end{definition}

Monotonicity is ensured for minimization with a base
because $A_1\subseteq A_2$ implies $Base\cup A_1\subseteq Base\cup A_2$.

Note that $Base$ does not have to be a subset of a core of $E$, and it may contain a core.
$Base$ only has to satisfy $A\cap Base=\emptyset$, and that $Base\cup A$ satisfies the
criterion. The following holds trivially.

\begin{proposition}
\label{prop:emptymin}
Given the notations of Definition~\ref{def:minwbase},
\begin{enumerate}[label=(\roman*)]
\item \label{prop:emptymin:coreoutcome} If $Base$ is a subset of all cores of $E$, then $Base\cup A'$ is a core of $E$. 
\item \label{prop:emptymin:emptyoutcome} If $Base$ contains a core of $E$, then $A'=\emptyset$.
\end{enumerate}
\end {proposition}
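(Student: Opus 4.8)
The plan is to derive both parts directly from the local minimality of $A'$ and the monotonicity of the criterion, after first recording two auxiliary observations. The first is that, since $E$ is finite, every subset $S\subseteq E$ that satisfies the criterion contains a core of $E$: repeatedly discarding elements of $S$ while the criterion still holds must terminate, and it terminates at a locally minimal satisfying subset, which is exactly a core. The second is that the local minimality of $A'$ guaranteed by Definition~\ref{def:minwbase} (namely, that removing any single element of $A'$ breaks the criterion for $Base\cup A'$) is, by monotonicity, equivalent to the stronger statement that no proper subset of $Base\cup A'$ satisfies the criterion; indeed, any satisfying proper subset would be contained in some single-element deletion, which would then satisfy the criterion as well.

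For part \ref{prop:emptymin:coreoutcome}, it suffices by the second observation to show that deleting any single element $z$ from $Base\cup A'$ yields a set that fails the criterion. Since $Base$ and $A$ are disjoint and $A'\subseteq A$, I would split into two cases. If $z\in A'$, then $(Base\cup A')\setminus\{z\}=Base\cup(A'\setminus\{z\})$, which fails by the local minimality of $A'$. If $z\in Base$, I argue by contradiction: were $(Base\cup A')\setminus\{z\}$ to satisfy the criterion, the first observation would supply a core $C$ of $E$ with $C\subseteq(Base\cup A')\setminus\{z\}$; but the hypothesis that $Base$ is contained in every core of $E$ gives $Base\subseteq C$, hence $z\in C\subseteq(Base\cup A')\setminus\{z\}$, contradicting $z\notin(Base\cup A')\setminus\{z\}$. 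Since $Base\cup A'$ satisfies the criterion by Definition~\ref{def:minwbase} and no proper subset does, it is a core of $E$.

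For part \ref{prop:emptymin:emptyoutcome}, let $C$ be a core of $E$ with $C\subseteq Base$. Since $C$ satisfies the criterion and $C\subseteq Base=Base\cup\emptyset$, monotonicity yields that $Base$ itself satisfies the criterion. Suppose for contradiction that $A'\neq\emptyset$ and pick any $x\in A'$. Local minimality forces $Base\cup(A'\setminus\{x\})$ to fail the criterion; yet $Base\subseteq Base\cup(A'\setminus\{x\})$ and $Base$ satisfies the criterion, so monotonicity forces $Base\cup(A'\setminus\{x\})$ to satisfy it, a contradiction. Hence $A'=\emptyset$.

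As the statement is flagged to hold ``trivially,'' the only genuine content lies in the first auxiliary observation, and I expect the one step needing care to be the case $z\in Base$ in part \ref{prop:emptymin:coreoutcome}: that is precisely where the hypothesis ``$Base$ is a subset of all cores'' must be converted, via the contains-a-core observation, into the impossibility of removing a $Base$-element.
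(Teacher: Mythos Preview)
Your argument is correct, and since the paper offers no proof at all (it simply declares that the proposition ``holds trivially''), there is nothing to compare against; you have supplied what the paper omits.

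One small phrasing issue: your second auxiliary observation, as written, asserts that the local minimality of $A'$ (i.e., that removing any element of $A'$ breaks the criterion) is \emph{equivalent} to the statement that no proper subset of $Base\cup A'$ satisfies the criterion. That equivalence is false in general---deleting an element of $Base$ may well leave the criterion satisfied without violating local minimality of $A'$. What you actually need, and what your justification (``any satisfying proper subset would be contained in some single-element deletion'') correctly establishes, is the weaker reduction: to show $Base\cup A'$ is a core it suffices to check that every \emph{single-element} deletion of $Base\cup A'$ fails. You then handle the $z\in A'$ case via local minimality and the $z\in Base$ case via your first observation plus the hypothesis, which is exactly right. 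So the mathematics is sound; only the summary sentence of the second observation overstates what is being claimed.
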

 
Finally, we use the idea of incremental minimization in some of our algorithms. 
Lemma~\ref{lem:parts} states that incrementally minimizing two subsets that partition a set 
produces a core.

\begin{lemma} [Incremental Core Computation]
\label{lem:parts}
Let $A$ and $B$ be disjoint sets such that $E=A\cup B$ satisfies a monotonic criterion.
Let $A'$ be a locally minimal subset of $A$ such that $A'\cup B$ satisfies the criterion,
and $B'$ be a locally minimal subset of $B$ such that $A'\cup B'$ satisfies the criterion.
Then $A'\cup B'$ is a core of $E$.
\end{lemma}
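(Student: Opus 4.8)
The plan is to show that $A' \cup B'$ satisfies the criterion and that it is locally minimal, i.e., removing any single element breaks the criterion. The first part is immediate: $B'$ was chosen precisely so that $A' \cup B'$ satisfies the criterion, so nothing further is needed there. The real work is local minimality, and I would split it into two cases according to whether the removed element lies in $A'$ or in $B'$.

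**First I would handle removal of an element $b \in B'$.** Since $B'$ is a locally minimal subset of $B$ with $A' \cup B'$ satisfying the criterion, by the definition of local minimality $A' \cup (B' \setminus \{b\})$ does \emph{not} satisfy the criterion for any $b \in B'$. This case is therefore essentially free, falling straight out of the defining property of $B'$.

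**The hard part will be removal of an element $a \in A'$,** because $A'$ was minimized relative to the \emph{full} set $B$, not relative to $B'$. Concretely, local minimality of $A'$ tells us that $(A' \setminus \{a\}) \cup B$ fails the criterion for every $a \in A'$, whereas what we need is that $(A' \setminus \{a\}) \cup B'$ fails. Here I would invoke monotonicity (Proposition~\ref{prop:monotonic}, or rather the abstract monotonic-criterion definition). Since $B' \subseteq B$, we have $(A' \setminus \{a\}) \cup B' \subseteq (A' \setminus \{a\}) \cup B$. If the smaller set $(A' \setminus \{a\}) \cup B'$ \emph{did} satisfy the criterion, then by monotonicity the larger set $(A' \setminus \{a\}) \cup B$ would satisfy it too, contradicting the local minimality of $A'$. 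Hence $(A' \setminus \{a\}) \cup B'$ fails the criterion, as required. A small point to keep clean is that $A'$ and $B'$ are disjoint (as subsets of the disjoint $A$ and $B$), so removing $a \in A'$ removes exactly one element from $A' \cup B'$ and the set $(A' \setminus \{a\}) \cup B'$ is genuinely a proper subset missing that element.

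**Combining the two cases,** no single-element removal from $A' \cup B'$ preserves the criterion, so $A' \cup B'$ is locally minimal; together with the fact that it satisfies the criterion, this makes it a core of $E$. The one subtlety worth flagging is the direction in which monotonicity is applied: it is used contrapositively to push failure \emph{downward} from the full-$B$ set to the restricted-$B'$ set, which is exactly why we may minimize $A$ against all of $B$ first and only afterwards shrink $B$ to $B'$ without losing minimality on the $A$-side.
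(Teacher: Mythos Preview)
Your proposal is correct and mirrors the paper's proof almost exactly: both argue that $A'\cup B'$ satisfies the criterion by construction of $B'$, then split local minimality into the $A'$-case (using minimality of $A'$ against the full $B$ plus monotonicity to pass to $B'$) and the $B'$-case (immediate from minimality of $B'$). The only differences are cosmetic---you treat the $B'$ case first and spell out the disjointness/contrapositive reasoning a bit more explicitly than the paper does.
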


\begin{proof}
By definition $A'\cup B'$ satisfies the criterion. $A'$ exists because $E$ satisfies the
criterion, and $B'$ exists because $A'\cup B$ satisfies the criterion.

To prove that $A'\cup B'\subseteq E$ is locally minimal, let $x\in A'\cup B'$. If $x\in A'$
then by definition of $A'$, $(A'\setminus \{ x \}) \cup B$ does not satisfy the criterion, thus by
monotonicity $(A'\setminus \{ x \}) \cup B'$ does not satisfy the criterion either. 
Otherwise $x\in B'$, and by definition
of $B'$, $A'\cup (B'\setminus \{ x \})$ does not satisfy the criterion.

\end{proof}

\section{\quickcore}
\label{sec:quickcore}

We are now ready to present our first contribution, the \quickcore algorithm,
which aims to accelerate unrealizable core computations. The correctness
and efficiency of \quickcore rely on the following observations.
First, that unrealizability is monotonic (see Prop.~\ref{prop:monotonic}).
Second, that it is possible to compute a core incrementally (see
Lemma~\ref{lem:parts}). Third, that checking the realizability of specifications
with fewer justices is typically much faster. Fourth, that removing justice
assumptions from an unrealizable specification that has no justice guarantees
preserves unrealizability and preserves winning regions. And finally, that
minimizing the initial assertions requires only a single computation of the
winning region of the system, plus a small constant number of symbolic operations.

\subsection{\quickcore Algorithm Overview}

Roughly, \quickcore begins by trying to remove as many justices as
possible. Therefore, if no justice guarantees are required for a core it removes all of them, and
all justice assumptions. Otherwise, it minimizes justice guarantees alone using \ddmin with a base.
Later, \quickcore minimizes safety guarantees using \ddmin with a base.
Finally, \quickcore minimizes all initial guarantees using \texttt{LinearMin}.


Algorithm~\ref{alg:qc} presents \quickcore in pseudo-code. 
\quickcore minimizes guarantees,
group by group, according to their type; first justices, then safeties, and ending with initial assertions.
It begins with a realizability check of the specification without justice guarantees in line~\ref{alg:qc:nojustcheck}. 
If the specification without these is realizable, at least one such justice guarantee is required for unrealizability, 
and so it minimizes justice guarantees alone (i.e., keeping all initial assertions and safeties)
using \ddmin with a base in line~\ref{alg:qc:minjust}.
Otherwise, it determines that the core it computes has no justice guarantees (line~\ref{alg:qc:nojust}) 
and removes all environment justices in line~\ref{alg:qc:noenvjust}
before it continues to look for system safeties and initial assertions.

With a specification that has a minimized set of justice guarantees and maybe no environment justices,
\quickcore uses \ddmin with a base in order to minimize the set of safeties alone
(line~\ref{alg:qc:minsafe}). It keeps initial assertions and the minimized justice guarantees unchanged.

At the last stage of \quickcore, it computes the winning region of the system in
line~\ref{alg:qc:win} for the specification with the minimized set of justices
and safeties. It now uses linear minimization in the loop in
lines~\ref{alg:qc:begini}-\ref{alg:qc:endini}, going over initial assertions one
by one, and checking if the system wins without each. If so, we keep this
initial assertion because we want to maintain unrealizability.
Checking this is done with $SysWin(\theta_e, \theta_s, w)$, which determines if
for every possible environment choice satisfying $\theta_e$, there is a choice
for the system from $\theta_s$ that is inside the winning region $w$.
This is computed with a small, constant number of symbolic operations.


\begin{remark}
Realizability checks in our implementation use the performance heuristics
suggested in~\cite{FirmanMR20}.
\texttt{QuickCore} must disable two of these heuristics, namely, fixed-point recycling and early
detection of unrealizability, before line~\ref{alg:qc:win}, in order to avoid an incomplete winning
region computation.
For example, the latter heuristics may over-approximate the winning region by halting the outer greatest fixed-point
when the winning region becomes small enough to ensure that the system loses.
This is good enough for realizability checks, so we
use it safely before line~\ref{alg:qc:win}, yet using it from that line on may wrongly
keep some initial guarantees.
\end{remark}

\begin{example}
For the specification in Lst.~\ref{lst:spec} (see Sect.~\ref{sec:example}),
\quickcore finds that at least one justice is required for a core in line~\ref{alg:qc:nojustcheck},
and detects the justice in line 36 of the specification with the minimization in line~\ref{alg:qc:minjust}.
Given this justice, the safety in line 27 of the specification is detected in line~\ref{alg:qc:minsafe}.
Lines~\ref{alg:qc:begini}-\ref{alg:qc:endini} decide that given the above two guarantees, the only
initial guarantee of the specification (line 21)  is required for the core. The resulting core is thus
the set of guarantees in lines \{21,27,36\}.
\end{example}

\begin{algorithm}
\caption{\small \textbf{QuickCore} Given an unrealizable specification find a locally minimal subset
of guarantees that keeps it unrealizable }

\label{alg:qc}
\begin{algorithmic}[1]\footnotesize
\Require{An unrealizable specification $S = \langle V_e, V_s, D, M_e, M_s\rangle$, where $M_e=\langle I_e, T_e, J_e\rangle$ and $M_s=\langle I_s, T_s, J_s\rangle$}
\Ensure{An unrealizability core of $S$}

\algorithmiccomment {Begin by minimizing only justice guarantees~~~~~~~~~~~~~~~~~~~~~~~~~}
  \If {\textbf{Realizable}$(\langle V_e, V_s, D, M_e, \langle I_s, T_s, \emptyset\rangle\rangle)$} \label{alg:qc:nojustcheck}
    \State {$J_c\leftarrow \textbf{MinWBase}(\ddmin, B_{M_s}, I_s\cup T_s, J_s,\neg\textbf{Realizable}(S))$}\label{alg:qc:minjust}
  \Else
    \State {$J_c\leftarrow \emptyset$} \label{alg:qc:nojust}
    \State {$M_e\leftarrow\langle I_e, T_e, \emptyset\rangle$} \label{alg:qc:noenvjust}
  \EndIf

\algorithmiccomment {Continue with minimizing only safety guarantees~~~~~~~~~~~~~~~~~~~~}
  \State {$T_c\leftarrow \textbf{MinWBase}(\ddmin, B_{M_s}, I_s\cup J_c, T_s,\neg\textbf{Realizable}(S))$}\label{alg:qc:minsafe}

\algorithmiccomment {End with minimizing only initial guarantees~~~~~~~~~~~~~~~~~~~~~~~~~}
  \State {$w\leftarrow \textbf{ComputeWinRegion}(\langle V_e, V_s, D, M_e, \langle I_s, T_c, J_c\rangle\rangle)$} \label{alg:qc:win}
  \State {$I_c\leftarrow I_s$} 
  \State {$envIni \leftarrow \wedge_{d\in I_e}d$}
  \For {$i \in I_c$} \label{alg:qc:begini} \algorithmiccomment{Use linear minimization for initial guarantees~~~~~~}
  	\State{$I_c\leftarrow I_c\setminus \{i\}$}
    \If {\textbf{SysWin}($envIni, \wedge_{d\in I_c}d, w$)}
  	  \State{$I_c\leftarrow I_c\cup \{i\}$}
    \EndIf
  \EndFor \label{alg:qc:endini}
  
  \State {return $I_c\cup T_c\cup J_c$}
\end{algorithmic}
\end{algorithm}

\subsection{Correctness of \quickcore}
\label{subsec:qccorrect}

\quickcore is correct in the sense that it detects an unrealizable core of the guarantees.
To prove correctness we will show that (1) Finding cores of subsets of guarantees one by one
yields a valid core; (2) Removing all justice assumptions from a specification
with no justice guarantees impact neither the realizability of any subset of guarantees, nor 
the winning region of the system; and (3) It is enough to compute the winning region of the system
once when minimizing initial guarantees.



(1) A core can be computed in a compositional manner, by minimizing pairwise disjoint subsets one by
one, using an extension by induction of Lemma~\ref{lem:parts} to any number of finite pairwise disjoint sets.
Specifically, \quickcore partitions the set of guarantees into
three sets, and minimizes each set with established algorithms for local minimum, namely,
\ddmin and \texttt{LinearMin}.


(2) Removing all justice assumptions when no justice guarantees are needed for a core does not
affect the overall correctness of \quickcore, because realizability checks and computation
of winning regions of the system are unchanged. 

Intuitively, in an unrealizable specification with no justice guarantees, the
environment can (and must) win with \textit{finite} plays of the game. Thus,
even though generally removing assumptions may turn a realizable specification
into an unrealizable one, this does not happen when removing justice assumptions
from a specification with no justice guarantees. For such specifications, when
the system does not lose finitely, it wins the infinite games, regardless of any
justice assumptions. Accordingly, all environment justices in this case are
unhelpful in the sense defined in~\cite{CimattiRST08}, i.e., assumptions whose
removal does not change the realizability induced by any subset of the
guarantees. Note that having several assumptions, each unhelpful alone, does not
mean that they are unhelpful as a set, yet justice assumptions in this case are
unhelpful as a set.



Formally, the game
structure of the GR(1) game~\cite{BJP+12} for both specifications is the same, because in GR(1) games, 
justices appear only in the winning condition 
$\bigwedge_{j \in J_e} \op{GF}j \rightarrow \bigwedge_{j \in J_s} \op{GF}j$. 
When there are no justice guarantees, this condition is 
$\bigwedge_{j \in J_e} \op{GF}j \rightarrow \top$, which is an LTL tautology regardless of the justice
assumptions. This ensures both the equirealizability, which is needed for the correctness of
the minimization of safeties in line~\ref{alg:qc:minsafe}, and the correctness of the 
winning region computation in line~\ref{alg:qc:win}.


(3) Finally, the correctness of the minimization of initial guarantees follows from the fact that checking
realizability has two parts. We first compute the winning region of the system in the GR(1) game,
and then check if the system can reach it given all possible environment initial choices. Since the
winning region depends only on safeties and justices, and we keep them unchanged at this phase of \quickcore,
it is enough to compute the winning region only once.


\subsection{Complexity of \quickcore}

Consider a specification with $n$ guarantees. 
The worst-case complexity of \ddmin is $O(n^2)$ realizability checks~\cite{ZellerH02}.
The same holds for \quickcore.
 
%

\section{\punch}
\label{sec:punch}

We now present our second contribution, the \punch algorithm for computing all cores. 

\punch finds all cores of a set $E$ according to a monotonic criterion.
The algorithm is generic in the sense that it requires a \texttt{check} that
evaluates the monotonic criterion, and a \texttt{computeCore}$(E',B)$ that
provides a core in $E'\subseteq E$, given $E'$ that satisfies the criterion, and
a set $B\subseteq E'$ that is a subset of all the cores in $E'$. In particular,
\punch detects all minimal size cores.
As a byproduct, \punch provides an efficient way to find the intersection of all
the cores, without having to compute all of them. This intersection provides an
early estimate of the size of the smallest cores, and, in particular, an early
verdict on the existence of additional cores.

\begin{algorithm}
\caption{\small \textbf{Punch} Find all cores 
according to a monotonic \textbf{check}}
\label{alg:punch}
\begin{algorithmic}[1]\footnotesize
\Require{A set $E$ of elements such that \textbf{check}(E)=$\top$}
\Require{A set $K\subseteq E$ that is a subset of all cores in $E$}
\Ensure{All cores in $E$}

	\State {$C_0 \leftarrow$ \textbf{computeCore}($E, K$)} \label{alg:punch:core}
	\State {$AllCores\leftarrow \{C_0\}$}
	\State {$CI \leftarrow \emptyset$}
	\State {$Cont \leftarrow \emptyset$}
	\For {$x\in C_0\setminus K$} \label{alg:punch:startsplit}
		\If {\textbf{check}($E\setminus\{x\}$)}   \label{alg:punch:check}
			\State {add $x$ to $Cont$}
		\Else
			\State {add $x$ to $CI$}		\label{alg:punch:ci}
		\EndIf						
	\EndFor 						\label{alg:punch:endsplit}
	
	\For {$x\in Cont$} 				\label{alg:punch:startcollect}
		\State {$AllCores \leftarrow AllCores \cup \textbf{Punch}(E\setminus\{x\}, K\cup CI)$} \label{alg:punch:recur}
	\EndFor							\label{alg:punch:endcollect}
   \State {return $AllCores$}		\label{alg:punch:ret}
 \end{algorithmic}
\end{algorithm}

Algorithm~\ref{alg:punch} presents \punch in pseudo-code, as a recursive algorithm
that takes two parameters as input, a set $E$ that has at least one core, and a set 
$K$ that is a subset of all the cores in $E$.
The recursive algorithm finds all the cores in $E$.
Thus, \punch$(E,\emptyset)$ returns all the cores in $E$.

The algorithm finds its first core $C_0\subseteq E$ by applying \texttt{computeCore} 
in line~\ref{alg:punch:core}. Later, in lines~\ref{alg:punch:startsplit}-\ref{alg:punch:endsplit},
it splits all the elements $x\in C_0\setminus K$
into two sets according to whether $E\setminus\{x\}$ satisfies the criterion. $Cont$ gets
all positives and $CI$ gets all negatives. Finally, in lines~\ref{alg:punch:startcollect}-\ref{alg:punch:endcollect},
for all elements $x \in Cont$,  it considers the punched sets
$E\setminus\{x\}$ (hence the name \texttt{Punch}), and recursively looks for all cores inside them, while adding
$CI$ to $K$. By collecting all such cores into $AllCores$ the computation ends.

\begin{example}

Applying \punch to the example in Lst.~\ref{lst:spec} finds the six cores
of the run described in Sect.~\ref{sec:example}. Specifically, that run
is of \texttt{PQC}($B_{M_s},\emptyset$) (see Sect.~\ref{subsec:instances}).

Two cores consist of the guarantees in lines 21, 27, and one of the guarantees
in lines 36 and 37. Indeed, if the lift starts at the first floor, and moves up
only when there are requests, it may never be able to reach the other two
floors. Three other cores consist of the guarantee in line 27, and pairs of the
guarantees in lines 35-37, which require the elevator to visit two different
floors infinitely. Indeed, the system can be forced to visit the lower of the
two floors, and not to go up (line 27). Finally, the guarantees in lines 24, 27,
30, and 37 allow the environment to force the lift to the first floor (line 30),
the system then may or may not move up to the second floor (line 24), which
allows the environment to keep the lift bellow the third floor (line 27), and
thus fail the guarantee to visit that floor (line 37).
\end{example}

\subsection{Correctness of \punch}

First we prove the following Lemma.


\begin{lemma}
\label{lem:ci}
In running \punch$(E,\emptyset)$, after the loop in lines ~\ref{alg:punch:startsplit}-\ref{alg:punch:endsplit}, 
the set $CI$ contains the intersection of all cores in $E$.  
\end{lemma}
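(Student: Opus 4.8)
The plan is to show that after the loop the set $CI$ equals the intersection of all cores of $E$, which I denote $I=\bigcap_{C\text{ a core of }E}C$; since the lemma only asserts the containment $I\subseteq CI$, proving equality is more than sufficient. The starting observation is that the top-level call uses $K=\emptyset$, so $C_0\setminus K=C_0$ and the loop ranges over every $x\in C_0$, placing $x$ into $CI$ exactly when $\textbf{check}(E\setminus\{x\})=\bot$. Hence after the loop $CI=\{x\in C_0:\textbf{check}(E\setminus\{x\})=\bot\}$, and the whole argument reduces to characterizing which elements satisfy this condition.

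The key step is the equivalence: for every $x\in E$, $\textbf{check}(E\setminus\{x\})=\bot$ if and only if $x\in I$. For the forward direction, suppose $\textbf{check}(E\setminus\{x\})=\bot$ and let $C$ be any core. If $x\notin C$ then $C\subseteq E\setminus\{x\}$, so by monotonicity $\textbf{check}(E\setminus\{x\})=\top$, a contradiction; thus $x\in C$, and as $C$ was arbitrary, $x\in I$. For the converse, suppose $\textbf{check}(E\setminus\{x\})=\top$. Since $E\setminus\{x\}$ satisfies the criterion and $E$ is finite, one can greedily remove elements while the criterion persists to obtain a locally minimal, hence core, subset $C^{*}\subseteq E\setminus\{x\}$; then $x\notin C^{*}$, so $x\notin I$. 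Contrapositively, $x\in I$ forces $\textbf{check}(E\setminus\{x\})=\bot$.

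With this equivalence established, the conclusion follows from the fact that $I\subseteq C_0$, because $C_0$ is itself one of the cores over which the intersection is taken. Consequently every $x\in I$ lies in $C_0$ and, by the equivalence, satisfies $\textbf{check}(E\setminus\{x\})=\bot$, so it is added to $CI$; this yields $I\subseteq CI$, which is exactly what the lemma claims. Conversely, any $x\in CI$ satisfies $\textbf{check}(E\setminus\{x\})=\bot$, hence $x\in I$ by the equivalence, giving $CI\subseteq I$ and therefore $CI=I$.

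I expect the main obstacle to be the converse direction of the characterization, which rests on the standard but unstated fact that every subset satisfying a monotonic criterion contains a core; this must be invoked and justified by finiteness rather than assumed. A second subtlety is that the loop inspects only the elements of $C_0$ and not all of $E$, which could in principle miss an element of the intersection; this is resolved precisely by the inclusion $I\subseteq C_0$, ensuring no member of the intersection is overlooked.
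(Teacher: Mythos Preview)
Your proof is correct and follows essentially the same route as the paper's own argument: identify $CI=\{x\in C_0:\textbf{check}(E\setminus\{x\})=\bot\}$ using $K=\emptyset$, then use the equivalence that $\textbf{check}(E\setminus\{x\})$ holds iff $E\setminus\{x\}$ contains a core iff $x$ is missing from some core. The paper's proof is a two-line sketch; you have simply unpacked both directions of the equivalence explicitly, justified the existence of a core inside any satisfying set via finiteness, and made the role of the inclusion $I\subseteq C_0$ explicit---all of which the paper leaves implicit.
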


\begin{proof}
$K=\emptyset$, thus $CI=\{x\in C_0 | check(E\setminus\{x\})=\bot\}$. Moreover, 
$check(E\setminus\{x\})=\top$ iff $E\setminus\{x\}$ contains a core iff 
$x$ does not belong to all the cores in $E$.
\end{proof}

We now show that the preconditions hold in all recursive calls, and that
\punch is sound and complete. 

\subsubsection*{The Preconditions of \punch are Met in All Recursive Calls}

\punch requires that $E$ satisfies the criterion, and that $K$ is a subset of all the cores in $E$. 
The two preconditions must hold for the recursive calls in line~\ref{alg:punch:recur}.

According to line~\ref{alg:punch:check}, $\forall x\in Cont$ \texttt{check}$(E\setminus \{x\})=\top$,
which satisfies the first precondition.
According to lines~\ref{alg:punch:startsplit} and~\ref{alg:punch:ci}, for all $x\in Cont$,
$K\cup CI\subseteq E\setminus \{x\}$. 
Now, by assumption, $K$ is a subset of all the cores in $E$, and 
by similar reasoning to that of Lemma~\ref{lem:ci}, $CI$ 
is also a subset of all the cores in $E$. In particular, for all $x\in Cont$, $K\cup CI$ must be a 
subset of all the cores in $E\setminus \{x\}$. 
This satisfies the second precondition.


\subsubsection*{\punch is Sound}

A set is added to our list of cores when it is detected by \texttt{computeCore} in recursive
calls at line 1 of \texttt{Punch}. The preconditions of 
\texttt{computeCore} are met because they match the preconditions of \texttt{Punch}.

\subsubsection*{\punch is Complete}

This follows from Thm.\ref{thm:complete} below.


\begin{theorem}
\label{thm:complete}
Let $C$ be a core of a set $E$. In running \punch~$(E,K)$ 
such that $K$ is a subset of all the cores in $E$, we will have $C\in AllCores$ in 
line~\ref{alg:punch:ret} of Alg.~\ref{alg:punch}.
\end{theorem}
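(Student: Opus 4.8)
The plan is to prove completeness by strong induction on $|E|$, which strictly decreases at every recursive call since each call is made on a punched set $E\setminus\{x\}$. Given the target core $C$ and the input $(E,K)$, I would split into two cases according to whether $C$ equals the first core $C_0$ returned by \texttt{computeCore} in line~\ref{alg:punch:core}. If $C=C_0$, then $C$ is inserted into $AllCores$ at initialization and nothing more is needed; this case simultaneously handles every situation in which no recursion along the relevant branch is required, so it serves as the base of the induction. The substantive case is $C\neq C_0$.

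For $C\neq C_0$, the first step is to exhibit an element of $C_0$ that $C$ omits. Since $C_0$ and $C$ are both cores, $C_0\subseteq C$ would force $C_0=C$, because a core has no proper subset satisfying the criterion; hence there exists $x\in C_0\setminus C$. I would then show that this $x$ is routed into $Cont$ by the loop in lines~\ref{alg:punch:startsplit}--\ref{alg:punch:endsplit}. Because $K$ is a subset of all cores we have $K\subseteq C$, and since $x\notin C$ this gives $x\notin K$, so $x\in C_0\setminus K$ is indeed examined by the loop. Moreover $C\subseteq E\setminus\{x\}$ and $C$ satisfies the criterion, so \texttt{check}$(E\setminus\{x\})=\top$, and therefore $x$ is added to $Cont$ rather than to $CI$.

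Next I would check that the recursive call $\punch(E\setminus\{x\},K\cup CI)$ triggered by this $x$ in line~\ref{alg:punch:recur} sees $C$ as a valid core under valid preconditions. That $C$ remains a core of $E\setminus\{x\}$ is immediate: $C$ satisfies the criterion, and since $C$ is a core of $E$ none of its proper subsets do. The preconditions of the call are exactly those already established in the ``Preconditions'' paragraph, namely \texttt{check}$(E\setminus\{x\})=\top$ for $x\in Cont$ and $K\cup CI$ being a subset of all cores in $E\setminus\{x\}$. Specialized to $C$, I note $K\cup CI\subseteq C$: indeed $K\subseteq C$ by hypothesis, and $CI\subseteq C$ because $CI$ lies in the intersection of all cores of $E$ (by the reasoning of Lemma~\ref{lem:ci}) while $C$ is one such core.

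Finally, since $|E\setminus\{x\}|<|E|$, the induction hypothesis applies to the recursive call and yields $C\in AllCores$ for that call; as its output is unioned into the enclosing $AllCores$ in line~\ref{alg:punch:recur}, we obtain $C\in AllCores$ at line~\ref{alg:punch:ret}, as claimed. The main obstacle, and the step I would treat most carefully, is the routing argument: showing that a differing core $C$ must omit an element of $C_0$ that lands specifically in $Cont$ and not in $CI$, so that the corresponding branch is actually explored. This is precisely where monotonicity and the characterization of $CI$ as part of the intersection of all cores are used, and it is what guarantees the recursion reaches the branch containing $C$.
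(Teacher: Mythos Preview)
Your proposal is correct and follows essentially the same approach as the paper: induction on $|E|$, case split on whether $C=C_0$, and in the nontrivial case exhibiting $x\in C_0\setminus C$, arguing $x\in C_0\setminus K$ and $x\in Cont$, and applying the induction hypothesis to the recursive call on $E\setminus\{x\}$. Your treatment is slightly more explicit about verifying the recursive preconditions and that $CI\subseteq C$, but the structure and key ideas are the same.
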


\begin{proof}
By induction on $n=|E|$, notice that for $n=1$ there must be exactly one core of size 1, and the 
algorithm is correct for both possible choices of $K$.

Assume by induction that the claim is correct for all sets strictly smaller than
$n$, and fix a set $|E|=n$ with $C\subseteq E$ a core of $E$. If $C$ is detected
in line 1, $C=C_0$ and we are done. Otherwise, $C\neq C_0$. Since $C$ is a core,
and $C_0$ satisfies the criterion, we know that $C_0\not\subseteq
C$, thus there is an $x\in C_0$ such that $x\notin C$. Moreover, $x\notin K$
because $K\subseteq C$, so $x\in C_0\setminus K$. Since $C$ is a core of $E$ and
$x\notin C$, $C$ is a core of $E\setminus \{x\}$. 
This means (1) that $E\setminus \{x\}$ satisfies the
criterion, which in turn means that $x\in Cont$; and (2) since $|E\setminus
\{x\}|<n$, by induction $C$ is found as a core by the call to \texttt{Punch} in
line~\ref{alg:punch:recur}, when the $x$ of the loop coincides with the $x$ in
the proof. This completes the proof.

\end{proof} 

\subsection{Complexity of \punch}
\label{subsec:punchcomplexity}

In general, the number of cores of a set according to a monotonic criterion may
be exponential in the size of the set~\cite{BendikC20}. Thus, the worst-case
complexity of \punch and of any other algorithm that would enumerate all cores, is
exponential.



Lemma~\ref{lem:ci} shows that obtaining the intersection of all the cores
requires only one core computation plus a number of realizability checks the
size of the core that we found. 

\subsection{Employing \punch to Unrealizable Cores}
\label{subsec:instances}

Employing \punch to compute all unrealizable cores requires an implementation of
\texttt{check} to check for unrealizability. We created two implementations of \punch,
which we label \texttt{PUD} and \texttt{PQC}. In \texttt{PUD}, we implemented
\texttt{computeCore} using \ddmin with a base. Specifically, $computeCore(E, K)$
in line~\ref{alg:punch:core} of \punch is implemented with $K\cup
MinWBase(\ddmin, E, K, E\setminus K, check)$. In \texttt{PQC}, we implemented
\texttt{computeCore} with an extended version of \texttt{QuickCore} that
supports minimization with a base. The version of \texttt{QuickCore} minimizes
justices, safeties, and initial assertions that do not belong to a given base
set. Prop.~\ref{prop:emptymin}\ref{prop:emptymin:coreoutcome} ensures the
correctness of both implementations of \texttt{computeCore}.

\section{Memoization}
\label{sec:memo}

We implemented memoization for our algorithms, which allows us to avoid checking
the criterion whenever the check is redundant, based on the results of prior
checks of the criterion and on the criterion's monotonicity.

Basically, we keep a set of prior positive checks and a set of prior negative
checks of the criterion. Whenever a check for a set is about to occur, if we
already have a positive criterion subset, then we know the set is positive and
we avoid actually checking it. Similarly, if we already have a negative superset
then we know the set is negative. Only if memoization fails we perform an actual
check of the criterion and store the result as positive or negative accordingly.

Two additional features accelerate the required subset checks. First, we keep
each memoized set sorted, which enables linear time subset checks. Second, we
keep the collections of positive and negative results sorted according to the
size of the memoized sets, and look for subsets and supersets of relevant size
only (this is correct because a larger set cannot be a subset of a smaller
set, i.e., if $|B|>|A|$ then $B\not\subseteq A$).

\subsection{Memoization in \quickcore}

Realizability checks are the most computationally expensive parts of \quickcore.
Almost all these checks occur as a part of \ddmin runs within it. We use the
\ddmin implementation in Spectra. In this generic implementation of \ddmin, sets
for which the criterion failed are recorded, and we avoid checking them and
their subsets because monotonicity ensures that they must fail as well. This
heuristics was already mentioned in~\cite{ZellerH02}, and implemented
in~\cite{KonighoferHB13} and in~\cite{FirmanMR20} for unrealizable cores.
On top of it, we use the memoization mechanism we described at the beginning of this section.



%
%

\subsection{Memoization in \punch}\label{subsec:mempunch}

We seek to avoid as many as possible calls to \texttt{check} and to
\texttt{computeCore}.

For \texttt{check}, we use the memoization mechanism we described in the
beginning of this section, and add all cores found by
\texttt{computeCore} to the set of positive sets. In \texttt{PQC} and \texttt{PUD} 
(see Sect.~\ref{subsec:instances}) all \texttt{check} operations share the same
memoization, whether the ones invoked at line~\ref{alg:punch:check} of \punch or
at their particular implementations of \texttt{computeCore}.

For \texttt{computeCore}, when we look for a core of a set $E$ (see
line~\ref{alg:punch:core} of \punch), we use the first core that is a subset of
$E$ in previously found cores, if one exists.
This is important because, for example, having two disjoint cores means that
without memoization, we would run a core computation to unnecessarily seek the
second core for every element of the first core. Memoization ensures that the
number of times we run an actual core computation in \punch is equal to the
number of cores in $E$.



\section{Beyond Pure GR(1) Specifications}
\label{sec:extensions}

We now present our third contribution, correct and efficient core computations
for specifications that are reducible to GR(1), yet include language constructs
beyond pure GR(1), such as patterns, monitors, and past LTL formulas.

\subsection{Reducing Higher-Level Constructs into Pure GR(1)}

Recall that many higher-level language constructs can be reduced to pure GR(1)
form by replacing them with additional auxiliary variables as well as new
guarantees or assumptions. See, e.g.,~\cite{MaozR15,MaozR21}.

A pattern (e.g., the response pattern $\op{G}~ (p\rightarrow \op{F}~q)$, which
is not in pure GR(1) form) is reduced according to a deterministic Buchi
automaton that represents it. The states of the automaton are encoded using new
auxiliary variables, its initial state is encoded using an auxiliary assertion
about the initial values of the auxiliary variables, its transitions are encoded
using an auxiliary safety, and its acceptance condition is encoded into a
justice (assumption or guarantee).


Monitors and counters are constructs that track a certain
value. They are reduced by adding auxiliary variables that encode that value,
and optional auxiliary elements that are assertions about its initial value, and
its current and next values (safeties). For example, the monitor in
Lst.~\ref{lst:mon} adds one Boolean auxiliary variable \texttt{a}, one
auxiliary initial assertion (line 4), and one auxiliary safety (line 5).

The reduction is completed by considering auxiliary variables and elements as
system variables and guarantees respectively. This allows one to apply GR(1)
realizability checks and synthesis.


\subsection{A Simple but Incorrect Approach}

One may suggest that core computations would minimize the GR(1) system module
elements, and then trace back to the elements that induced them in the original
specification. This approach, however, is incorrect. As an example, for the
unrealizable specification in Lst.~\ref{lst:mon}, the (incorrect) core computed
by this approach contains only lines 8 and 9. The reason is that only system
elements are minimized. The incorrect computation ignores the auxiliary monitor
initial assertion at line 4 although without this assertion, the specification
is realizable! If unrealizability is a result of auxiliary elements alone, we
may even incorrectly get an empty core (see Prop.~\ref{prop:emptymin}\ref{prop:emptymin:emptyoutcome}). This
means that we must also consider auxiliary elements for minimization, and at the
same time avoid redundant ones that unnecessarily complicate the reduced
specification and inflate the state space.


\lstset{language=Spectra}
\lstinputlisting[label=lst:mon, float=t, caption={An unrealizable specification with a monitor},
belowskip=-2em,aboveskip=3pt]{listings/monitor.smv}

\subsection{Our Approach}

We have implemented a framework to correctly handle specifications that include
high-level constructs. The framework relies on two-way traceability between the
high-level language construct and the GR(1) elements it reduces to.

Specifically, each distinguishable specification construct, as written by the
engineer, is assigned an ID that represents all of the GR(1) elements it reduces
to. Thus, all elements induced from patterns and past LTL operators are assigned
the ID of the high-level element they belong to, while each assertion inside
monitors and counters has its own ID.

Our implementation builds system and environment modules according to subsets of
IDs. Core computation begins with a set of all environment IDs for the
assumptions, and a set of auxiliary and system IDs for the guarantees. It
performs realizability checks given subsets of IDs. For example, \quickcore may
eliminate all justice assumptions (line~\ref{alg:qc:noenvjust}). If any of these
justice assumptions were induced by patterns, the produced environment module
avoids not only these justices but also their matching pattern-induced initial
and safety assertions, as well as the pattern-induced auxiliary variables that
encoded the states of this pattern's automaton. These are the exact sets of
elements and variables that match the subset of high-level assumptions.

In Lst.~\ref{lst:mon}, the correct core we
compute in this way includes lines 4, 8, and 9.  Together, they are sufficient
for unrealizability, and each of them is necessary.

\section{Evaluation}
\label{sec:evaluation}

We have implemented our ideas on top of Spectra~\cite{MaozR21,SpectraWebsite},
with the performance heuristics from~\cite{FirmanMR20}. Our implementation
includes \quickcore and \punch. For the purpose of evaluation, it also includes
an instance of the \ddmin algorithm implemented in Spectra, an implementation of
\qxplain~\cite{Junker04, MarquesJB13}, and a naive top down algorithm for
computing all cores we label \texttt{TD} (see below). All the above
implementations take advantage of the memoization described in
Sect.~\ref{sec:memo}.

Means to run our implementation, all specifications used in our evaluation, and
all data we report on below, are available in supporting materials for
inspection and reproduction~\cite{unrealcoreWebsite}.
We encourage the interested reader to try them out.

The following research questions guide our evaluation.  

\begin{packed_description}
\item[R0] Which existing domain-agnostic minimization algorithm is the most efficient in our setup?
\item[R1] Is \quickcore efficient, in terms of the number of realizability checks and running time, in comparison to previous algorithms?
\item[R2] Is \quickcore effective, in terms of the size of the core it finds, in comparison to previous algorithms?
\item[R3] Are specifications with multiple unrealizable cores common and how many such cores do most specifications have?
\item[R4] Is \punch efficient in detecting all cores?
\end{packed_description}

%
%

Below we report on the experiments we have conducted in order to answer the above questions. 



\subsection{Corpus of Specifications}
\label{subsec:material}

We use the benchmarks \texttt{SYNTECH15} and
\texttt{SYNTECH17}~\cite{FirmanMR20,MaozR21,SpectraWebsite}, which include a
total of 227 specifications of 10 autonomous Lego robots, written by 3rd year
undergraduate computer science students in a project class taught by the authors
of~\cite{FirmanMR20}. We use all the unrealizable GR(1) specifications from
these, i.e., 14 unrealizable specifications from \texttt{SYNTECH15} (which we
label \texttt{SYN15U}) and 26 unrealizable specifications from
\texttt{SYNTECH17} (which we label \texttt{SYN17U}).


In addition, we used 5 different sizes of \texttt{AMBA}~\cite{BloemGJPPW07} and
of \texttt{GENBUF}~\cite{BloemGJPPW07entcs} (1 to 5 masters, 5
to 40 senders resp.), from each of the 3 variants of unrealizability described
in~\cite{CimattiRST08}. We label these 30 specifications by \texttt{AM+GN}. Note
that these specifications are structurally synthetic and artificially inflated.
We therefore report on their performance (in R0, R1, and R4) but not on their more
qualitative aspects (R2 and R3). Still, the supporting
materials~\cite{unrealcoreWebsite} include all the data we collected.

\subsection{Validation}
\label{subsec:validation}

We have implemented an automatic test that checks that every core that we found
is indeed a locally minimal subset of the guarantees that maintains
unrealizability. We run this check over logs of cores produced by our
algorithms, independent of their original detection. We also verified that the
different algorithms that compute all the cores of a specification (i.e., ones
that terminated before the timeout was reached) found the same number of cores.


\subsection{Experiments Setup}

We ran all experiments on an ordinary PC, Intel Xeon W-2133 CPU 3.6GHz, 32GB RAM
with Windows 10 64-bit OS, Java 8 64Bit, and CUDD 3 compiled for 64Bit, using
only a single core of the CPU. 

Times we report are average values of 10 runs, measured by Java in milliseconds.
Although the algorithms we deal with are deterministic, we performed 10 runs
since JVM garbage collection and BDD dynamic-reordering add variance to running
times.


We used a timeout of 10 minutes for the algorithms that compute all cores, and no timeout
for the algorithms that find a single core.

\subsection{Results: Existing Domain-Agnostic Algorithms in Our Setup}
\label{subsec:qxvsddmin}

\begin{table}\small
\caption{\small Efficiency of \ddmin vs. \qxplain}\vspace{-.5em}
\label{tbl:ddminvsqx}
\resizebox{\columnwidth}{!}{
\begin{tabular}{l || r | r | r | r | r}
Spec set & $\leq 0.1s$ &  0.1-1s & 1-10s & 10-100s & $\geq 100s$ \\
\hline
SYN15U & 0.10 & 0.48 & - &-&-\\
SYN17U & 0.08 & 0.28 & 0.41 & 0.52 & 0.84 \\
\hline
AM+GN  & 0.10 & 0.40 & 0.42 & 0.80 & 0.99 \\

\end{tabular}
}
\end{table}

In Sect.~\ref{subsec:ddmin} we discussed three existing domain-agnostic algorithms
for finding a local minimum given a monotonic criterion. We also noted that
\ddmin was compared and found superior to \texttt{LinearMin} in~\cite{KonighoferHB13}.

Table~\ref{tbl:ddminvsqx} presents the performance of \ddmin versus \qxplain.
The columns show the geometric mean of the ratio of the running times (namely,
the running time of \ddmin divided by that of \qxplain), dissected by the
running time range obtained for \ddmin. We use \texttt{`-'} to mark cases
in which no specifications had \ddmin running time within the corresponding
range. For example, the number 0.28 in the second row means that for
\texttt{SYN17U} specifications for which a core was found by \ddmin in
between 0.1 and 1 seconds, the geometric mean indicates that \ddmin was more
than three times faster than \qxplain.

The results show that \ddmin is more efficient than \qxplain on all our
corpus (although the gap lessens with scale). This justifies our choice of
\ddmin as the domain-agnostic minimization algorithm inside \quickcore and
\texttt{PUD}. It also justifies our choice to use \ddmin as the baseline
algorithm for examining the efficiency of \quickcore (see R1 below).

\vspace{.5em}
\resultbox{To answer R0:
\ddmin is the most efficient domain-agnostic single core computation algorithm
in our setup.
}

\subsection{Results: Efficiency of \quickcore versus \ddmin}
\label{sec:evaluation:qctime}

\begin{table}\small
\caption{\small Efficiency of \quickcore vs. \ddmin}\vspace{-.5em}
\label{tbl:qc}
\resizebox{\columnwidth}{!}{
\begin{tabular}{l || r | r | r | r | r}
Spec set & $\leq 0.1s$ &  0.1-1s & 1-10s & 10-100s & $\geq 100s$ \\
\hline
SYN15U & 0.79 & 0.51 &-&-&-\\
SYN17U & 1.38  & 0.37 & 0.54 &0.06& 0.015 \\
\hline
AM+GN & 1.53 & 0.63 & 0.71 & 0.30 & 0.20 \\

\end{tabular}
}
\vspace{-1em}
\end{table}

%

Table~\ref{tbl:qc} presents the performance of \quickcore versus \ddmin.
We chose to compare \quickcore with \ddmin because \ddmin is a well known and
widely used algorithm for core computation over a monotonic criteria, and
because it was previously used in the context of unrealizable cores for GR(1)
specifications. Moreover, in R0 above we showed that \ddmin is the most efficient
domain-agnostic single core algorithm on our corpus.

The columns show the geometric mean of the ratio of the
running times (namely, the running time of \quickcore divided by that of
\ddmin), dissected by the running time range obtained for \ddmin, for all
specifications in each set. We use \texttt{`-'} to mark cases in which no
specifications had \ddmin running time within the corresponding range. For
example, the number 0.30 in the third row means that for \texttt{AM+GN}
specifications for which a core was found by \ddmin in between 10 and 100
seconds, the geometric mean indicates that \quickcore was more than three times
faster than \ddmin.


The results show that \quickcore is in most cases much faster than \ddmin.
This improvement gets better with scale, i.e., almost consistently, the slower
\ddmin, the faster \quickcore becomes relative to it.  The acceleration is most
noticeable for \texttt{SYN17U} specifications for which \ddmin require over 10
seconds. For those specifications \quickcore was faster than \ddmin by well over an order
of magnitude.

The only two categories in which the running time of \quickcore is worse than
that of \ddmin is for \texttt{SYN17U} and \texttt{AM+GN} specifications whose \ddmin 
running time is at most 100 milliseconds. 
Since for this range running times are very small, 
we do not consider it to be a major weakness of \quickcore.

We also computed the actual number of realizability checks (i.e., without
realizability checks avoided by memoization, see Sect.~\ref{sec:memo}) of
\quickcore and \ddmin (not shown in the table). We found that the median
reduction in the number of actual realizability checks of \quickcore over \ddmin
was 11.3\%, 19.5\%, and 15.4\%, and over \qxplain was 3.3\%, 10.2\%, and
27.6\%, for \texttt{SYN15U}, \texttt{SYN17U}, and \texttt{AM+GN}, respectively.

\vspace{.5em}
\resultbox{To answer R1:
\quickcore typically performs fewer realizability checks than \ddmin and
\qxplain, it is in most cases much faster than \ddmin, and the running time
improvement seems to become better with scale.
}

\subsection{Results: Effectiveness of \quickcore}
\label{sec:evaluation:qcsize}

\begin{table}\small\center
\caption{\small Core sizes}\vspace{-.5em}
\label{tbl:size}
\begin{tabular}{l || r || r | r | r | r }
Spec set & Core Size & \texttt{QC} & \ddmin & \texttt{QX} & \texttt{Global}\\
\hline
SYN15U &19\% & 5.14 & 5.07 & 5.21 & 5.07 \\
SYN17U &18\% & 4.5 & 4.38 & 4.5  & 3.92 \\
\end{tabular}
\vspace{-1em}
\end{table}

Table~\ref{tbl:size} presents core size results.
Column \texttt{Core size} shows the median ratio between the size of cores found
by \quickcore and the total number of guarantees in the specification.
Columns \texttt{QC}, \ddmin, \texttt{QX}, and \texttt{Global} show the average absolute size of
the cores found by \quickcore, \ddmin, \qxplain, and the size of the smallest
core found until the timeout by \punch, respectively.


The results show that in the \texttt{SYNTECH} specifications, most of the cores
found by \quickcore are over five times smaller than the total number of
guarantees in the specification. They further show that the cores found by
\quickcore have a slightly different size than the cores found by the other
algorithms, and that all algorithms output cores that are close in size to the
size of the globally minimal core.

\vspace{.5em}
\resultbox{To answer R2:
\quickcore, \ddmin, and \qxplain are 
all effective in localizing unrealizability.}

\subsection{Results: Number of Cores in Specifications}
\label{sec:evaluation:mult}

\begin{table}\small\center
\caption{\small Number of cores}\vspace{-.5em}
\label{tbl:howmany}
\begin{tabular}{l || r | r || r | r | r | r}
Spec set & S &  M & $\geq 5$ & $\geq 10$ & $\geq 50$ & $\geq 100$\\
\hline
SYN15U & 6  &  8 & 5 & 2 & 1 & 0\\
SYN17U & 6  & 20 & 9 & 7 & 4 & 4\\
\end{tabular}
\vspace{-1em}
\end{table}

Table~\ref{tbl:howmany} presents the number of cores in the
\texttt{SYNTECH} specifications. 
Columns \texttt{`S'} and \texttt{`M'} show how many specifications have a single core and multiple cores, resp. The
remaining four columns show how many of these specifications have at least 5,
10, 50, and 100 cores. For example, the number 1 in the first row under
\texttt{`$\geq 50$'} means that exactly one specification of \texttt{SYN15U} has
at least 50 cores.

\vspace{.5em}
\resultbox{To answer R3: 
Most \texttt{SYNTECH} specifications have multiple cores. 
Specifications with over 50 cores exist in each set of specifications.
These results motivate the need to compute more than one core per specification.}

\subsection{Results: Running Times of All Cores Algorithms} 

In order to evaluate the performance of \punch for computing all unrealizable
cores, we use \texttt{PQC} and \texttt{PUD} (see Sect.~\ref{subsec:instances}).



Since \punch is the first algorithm employed to compute all unrealizable
cores for temporal specifications reducible to GR(1), as a comparison, we use a
rather naive baseline we label \texttt{TD} (we discuss alternative algorithms in
Sect.~\ref{sec:related}).

\texttt{TD} is a naive top down search for all cores.
For a given subset of guarantees, if \texttt{check} fails, \texttt{TD} knows
that the subset and all its subsets are not cores.
Otherwise, it continues recursively to all subsets that exclude exactly one
element. It detects the set as a core iff all of these subsets fail
\texttt{check}. It memoizes subsets we already finished computing all cores for,
to avoid unnecessary recursive calls (in addition to the memoization described
in Sect.~\ref{sec:memo}). To detect unrealizable cores, \texttt{check} is
implemented as an unrealizability check.

\begin{figure}
  \centering
  \includegraphics[width=\linewidth]{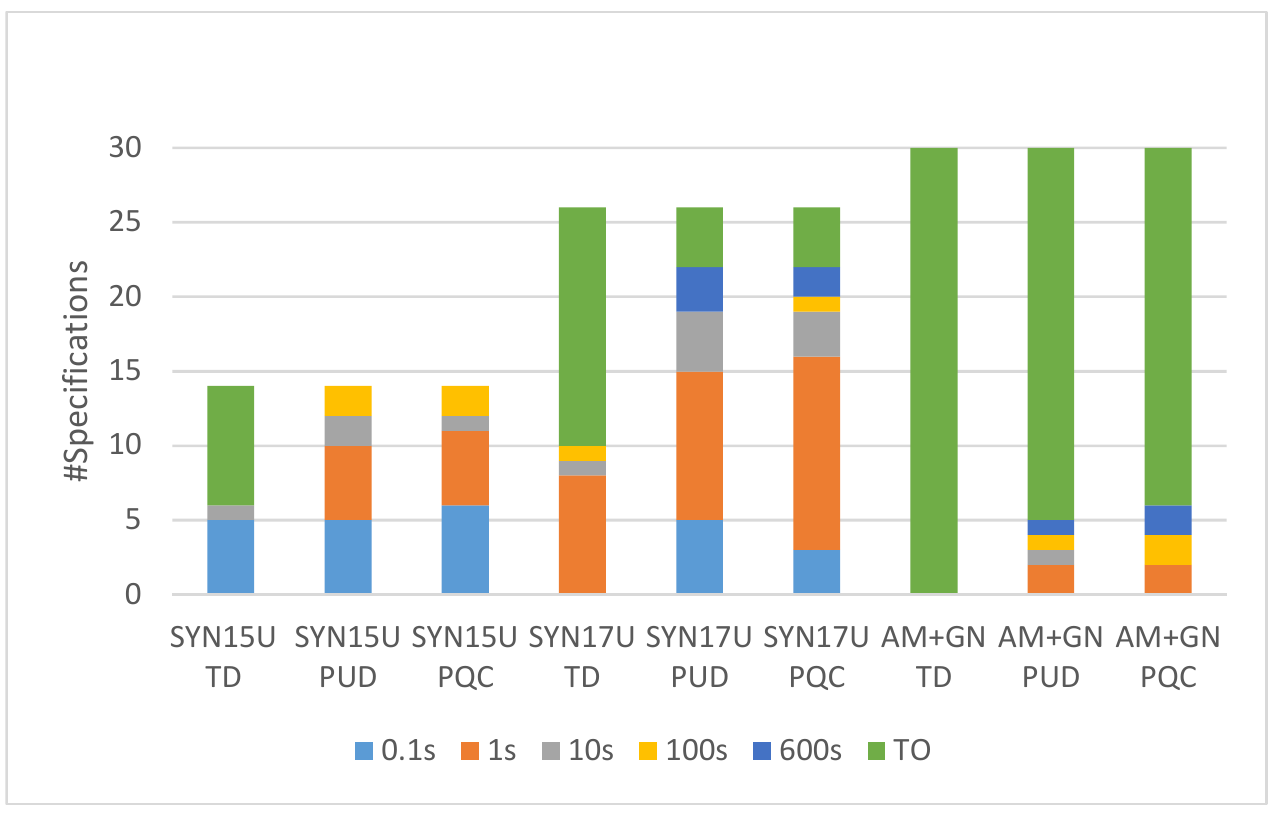}
  \vspace{-2em}
  \caption{\small Running times to compute all cores using \texttt{TD} (left columns) \texttt{PUD} (center columns), and \texttt{PQC} (right columns),
  for the SYNTECH and the \texttt{AM+GN} sets,
  divided by increasing ranges, in seconds.}
  \vspace{-1em}
  \label{fig:perform}
\end{figure}

Figure~\ref{fig:perform} shows the running times for \texttt{SYN15U},
\texttt{SYN17U}, and \texttt{AM+GN} specifications. Each specification set has
three columns, one for \texttt{TD}, one for \texttt{PUD}, and one for
\texttt{PQC}. Each column shows a breakdown of how many specifications completed
running within 0.1, 1, 10, 100, and 600 seconds.
Finally, we mark the number of specifications that timed out (did not complete
all 10 runs within 10 minutes) with \texttt{TO}.

The results show that the differences between the performance of 
\texttt{PUD} and \texttt{PQC} are minor, with mostly a slight advantage to \texttt{PQC}.
%
Regardless of their minor differences, the two instances of \punch perform
significantly better than \texttt{TD}. The two instances of \punch were able to
find all cores of \texttt{SYN15U} specifications within 100 seconds, and 22
out of the 26 of the \texttt{SYN17U} specifications before the timeout was
reached. This is much better than \texttt{TD}, which was able to find all cores within the timeout
for less than half of the specifications. On \texttt{AM+GN}
specifications, \texttt{PUD} and \texttt{PQC} found all cores for 5 and 6 specifications respectively, but
\texttt{TD} did not complete the computation on any of the specifications.

\vspace{.5em}
\resultbox{To answer R4: 
Both instances of \punch are significantly faster than \texttt{TD}
in finding all the cores of a specification.
\texttt{PQC} seems slightly better than \texttt{PUD}.}

\subsection{Additional Results}
\label{subsec:add}



Recall that \punch provides early detection of the intersection of all the cores
(see Sect.~\ref{subsec:punchcomplexity}).
The size of the intersection is a lower bound on the size of the global minimum.
In practice, it provides a good early estimate of the size of the global minimum
for \texttt{SYNTECH} specifications.
The results show
that the core intersection size is on average 76.8\% and 52.5\% of the smallest
core found by \punch within the timeout for \texttt{SYN15U} and \texttt{SYN17U}
respectively.


\subsection{Threats to Validity}

We discuss threats to the validity of our results. First, symbolic computations
are not trivial and our implementation of \quickcore, \ddmin, \qxplain, and
\punch may contain bugs. To mitigate, we performed a thorough validation using
all specifications available to us, see Sect.~\ref{subsec:validation}. Second,
we have based most of our evaluation on the \texttt{SYNTECH} specifications,
which were created by 3rd year undergraduate CS students with no prior
experience in writing LTL specifications (collected by the authors
of~\cite{MaozR21} in classes they have taught).
We further examined specifications from the \texttt{AM+GN} set. We do not know
if these are representative of specifications engineers would write in practice.
Third, although we have found that cores are typically much smaller than the
complete set of guarantees (see Tbl.~\ref{tbl:size}, roughly 5 guarantees
instead of 25), we did not perform a user-study, with engineers, to examine
whether users will find the reported cores useful for understanding the reasons
for the unrealizability of their specifications.

%
%
%
%

\section{Related Work}
\label{sec:related}

\subsection{A Single Unrealizable Core for GR(1)}

Previous works have considered the computation of an unrealizable core for GR(1)
specifications. Cimatti et al.~\cite{CimattiRST08} have used \texttt{LinearMin}
(see Sect.~\ref{subsec:ddmin}). Konighofer et al.~\cite{KonighoferHB13} have
used \ddmin and implemented it in the RATSY synthesizer. Their comparison of
\ddmin with \texttt{LinearMin} shows that \ddmin is almost always much faster
than \texttt{LinearMin}, with a greater advantage on larger specifications.
Firman et al.~\cite{FirmanMR20} have used \ddmin with several performance
heuristics, including memoization, and implemented it in the Spectra
synthesizer. All these were limited to computing a single core and did not
correctly handle specifications with constructs beyond pure GR(1). We present
\quickcore to be used instead of \ddmin. We further show how to correctly handle
specifications with constructs beyond pure GR(1). We compare \quickcore to
Spectra's implementation of \ddmin, i.e., with the heuristics
from~\cite{FirmanMR20}, and our evaluation provides evidence that \quickcore is
faster and scales better than \ddmin.

Our choice of \ddmin for both the algorithm we compare to, and the algorithm
used within \quickcore for the incremental minimization, is based both on \ddmin being
a well known and widely used domain-agnostic minimization algorithm, and on the fact
that it was the choice in previous work on GR(1) unrealizability. 

There are other domain-agnostic minimization algorithms over monotonic criteria
for single cores, e.g., \qxplain~\cite{Junker04, MarquesJB13}.  Compared to
\ddmin, \qxplain has a better asymptotic complexity in terms of the number of
checks (see Sect.~\ref{subsec:ddmin}). Nevertheless, as we show in our evaluation
(Sect.~\ref{subsec:qxvsddmin}), \qxplain performs worse on our corpus.


For temporal specifications, Schuppan~\cite{Schuppan12} presented LTL unsatisfiability cores by weakening
LTL formulas in a way that ignores sub-formulas not required for unsatisfiability.
He further presented a similar approach for GR(1) unrealizability cores.
To the best of our knowledge, these ideas have only been explored theoretically.
Moreover, the work neither handles all cores nor deals with extensions of GR(1).

\subsection{All Unrealizable Cores for GR(1)}

We present \punch as the first efficient algorithm to compute all
unrealizable cores of specifications reducible to GR(1). However, \punch is a
domain-agnostic algorithm. Other domain-agnostic algorithms for all cores
computations over monotonic criteria appear in the literature, see,
e.g.,~\cite{LiffitonPMM16}. In~\cite{BendikC18} there is a comparison of several
such algorithms, which concludes that none of the known algorithms is better
than the others in all domains. Recently, MUST~\cite{BendikC20} was proposed as
an algorithm and tool that outperforms previous ones.

\punch is intended as a first algorithm for the computation of all cores of
unrealizable GR(1) specifications. We consider its comparison against
domain-agnostic algorithms, as well as its specialization for GR(1) as future
work, see Sect.~\ref{sec:conclusion}.

\subsection{Other Approaches to Dealing with GR(1) Unrealizability}

Beyond core computations, other approaches have been suggested to dealing with
unrealizability of GR(1) specifications. Maoz and Sa'ar~\cite{MaozS13} have
presented the computation of counter-strategies, which show how the environment
can prevent any system from satisfying the specification. Kuvent et
al.~\cite{KuventMR17} have presented the JVTS, a symbolic, more succinct and
simple representation of a GR(1) counter-strategy. 


Other works have considered means to repair unrealizable specifications by
automatically suggesting additional assumptions that will make the specification
realizable, see,
e.g.,~\cite{AlurMT13,CavezzaA17,CavezzaAG18,CavezzaAG20,MaozRS19}. It may be
possible to combine the computation of a core or of all cores with a repair
approach. See our discussion of future work in Sect.~\ref{sec:conclusion}.

\section{Conclusion and Future Work}
\label{sec:conclusion}

We presented three contributions related to the computation of unrealizable
cores of GR(1) specifications, including faster algorithms for computing an
unrealizable core and for computing all cores. We further presented means to
correctly compute the core when specifications include high-level
constructs.

We implemented our work, validated its correctness, and evaluated it on
benchmarks from the literature. The evaluation shows that \quickcore is usually
faster than previous algorithms, with a relative advantage that improves with
scale. Moreover, we found that most specifications have multiple cores, and that
\punch finds all the cores significantly faster than a competing naive algorithm.

Our work has important implications to anyone using GR(1) specifications and
their extensions for synthesis and related analyses. First, core computations
are now faster, and computing more than one core promotes a more comprehensive
view of unrealizability. Moreover, we handle higher-level constructs correctly
and efficiently, and our algorithm for finding all cores extends to cores for
any monotonic criterion.

We consider the following concrete future work directions.
First, as \punch in its raw form is domain-agnostic, it is important to compare its
performance with recent domain-agnostic all cores minimization algorithms over
monotonic criteria such as MUST~\cite{BendikC20}.

We have already presented a variant of \punch, namely \texttt{PQC}, which
employs a domain-specific algorithm (a variant of \quickcore) to compute a
single core, and compared it with another variant, namely \texttt{PUD}, which
uses \ddmin, a domain-agnostic algorithm for a single core computation.
The results showed that \texttt{PQC} performs slightly better than \texttt{PUD}.
It may be possible to improve the performance of detecting all unrealizable
cores by taking advantage of GR(1) specific properties in other ways.

Second, we consider means to combine the computation of a core or of all cores
with a repair approach. For example, a repair of unrealizability that is based on weakening a small
as possible subset of the guarantees could rely on the fact that it cannot
succeed without weakening at least one guarantee from every core, or in particular
weakening one guarantee from a non-empty intersection of all cores.

Finally, the ability to compute all cores raises questions as to their
presentation to the engineer. Should all cores be computed and presented?
Perhaps an on demand approach should be used? In which order should we present
the cores? These questions call for further investigation and evaluation.



    
    

\section*{Acknowledgements}
We thank the anonymous reviewers for their helpful comments.
We thank Roee Sinai for implementing the \qxplain algorithm in the Spectra environment.
We thank Inbar Shulman for comments about an earlier draft of the paper.
This project has received funding from the European Research Council (ERC)
under the European Union's Horizon 2020 research and innovation programme (grant
agreement No 638049, SYNTECH).

\newpage
\bibliographystyle{abbrv}
\bibliography{doc}

\end{document}